\newtheorem{theorem}{Theorem}
\newtheorem{proposition}[theorem]{Proposition}
\newtheorem{definition}[theorem]{Definition}
\newtheorem{remark}[theorem]{Remark}
\newtheorem{goal}[theorem]{Goal}
\newcommand{\de}{\partial}
\newcommand{\R}{\mathbb{R}}
\newcommand{\N}{\mathbb{N}}
\newcommand{\dt}{\mathrm{d}t}
\newcommand{\dtau}{\mathrm{d}\tau}
\newcommand{\tr}{^T}
\newcommand{\mc}{\mathcal}
\title{\LARGE \bf
Passivity-Based Decentralized Control of Multi-Robot Systems\\
With Delays Using Control Barrier Functions
}
\author{Gennaro Notomista, Xiaoyi Cai, Junya Yamauchi, and Magnus Egerstedt
\thanks{This work was supported by ARL DCIST CRA W911NF-17-2-0181, Grant No. 1723943 from the U.S. National Science Foundation and JSPS KAKENHI Grant No. 18H01459.}%
\thanks{Gennaro Notomista, Xiaoyi Cai and Magnus Egerstedt are with the Institute for Robotics and Intelligent Machines, Georgia Institute of Technology, Atlanta, GA, USA {\tt\small \{g.notomista,xcai41,magnus\}@gatech.edu}}%
\thanks{Junya Yamauchi is with the Department of Systems and Control Engineering, Tokyo Institute of Technology, Tokyo, Japan {\tt\small yamauchi@sc.e.titech.ac.jp}}
}
\begin{document}

\maketitle
\thispagestyle{empty}
\pagestyle{empty}

\begin{abstract}

In this paper, we present a solution to the problem of coordinating multiple robots across a communication channel that experiences delays. The proposed approach leverages control barrier functions in order to ensure that the multi-robot system remains dissipative. This is achieved by encoding the dissipativity-preserving condition as a set invariance constraint. This constraint is then included in an optimization problem, whose objective is that of modifying, in a minimally invasive fashion, the nominal input to the robots. The formulated optimization problem is decentralized in the sense that, in order to be solved, it does not require the individual robots to have access to global information. Moreover, owing to its convexity, each robot can solve it using fast and efficient algorithms. The effectiveness of the proposed control framework is demonstrated through the implementation of a formation control algorithm in presence of delays on a team of mobile robots.

\end{abstract}

\section{INTRODUCTION}
\label{sec:intro}

Delayed communication networks pose a big challenge to coordinated control algorithms for multi-robot systems \cite{zampieri2008trends}. In order to implement consensus-like algorithms---where the robots are asked to agree on a common objective, such as the location where to meet \cite{olfati2004consensus} or the shape to assemble \cite{egerstedt2001formation}---or coverage control algorithms \cite{cortes2004coverage}, to name a few, the robots need to exchange information with their peers. Delays in transferring information can cause the performance of the algorithms to degrade, in terms of both convergence rate and stability \cite{olfati2004consensus}. 

In this paper, we propose a control strategy that is able to mitigate the effects of a delayed communication channel in multi-robot systems. This is achieved by modifying, in a \textit{minimally invasive} fashion, the nominal control input that is to be executed by the robots: this means that the modification of the robots' inputs only takes place if the effect of delays is about to compromise the desired performance of the system.

The problem of controlling multi-robot system with delays has been widely studied and different solutions have been proposed, ranging from predictor-based \cite{richard2003time} to passivity-based controllers \cite{chopra2006passivity}. In \cite{olfati2004consensus}, the authors analyze the performance of the consensus protocol with constant time delay and a fixed network topology. Their graph-theoretic approach shows the tight connection between convergence rate, eigenvalues of the graph Laplacian, and admissible time delays.

Different approaches, based on passivity, as well as other energy-based methods, are considered in  \cite{anderson1989bilateral,niemeyer1991stable,wohlers2017lumped}. Here, the so-called \textit{scattering transformation} is introduced in order to stabilize a system in the presence of any, although constant, communication delay. On a similar line of inquiry, in \cite{yamauchi2017passivity}, the authors employ the scattering transformation to make the interconnection of a network of robots with a human operator passive under constant time delays. It is worthwhile noticing that scattering-based approaches are suitable to make robots achieve synchronization tasks (such as consensus). However, these approaches cannot be employed to compensate for time delays when non-passive control laws---such as the formation controller in \cite{mesbahi2010graph}---are used.

Delays in communication channels can be viewed in terms of energy injected into a system, which can become non-passive, as discussed in~\cite{anderson1989bilateral}. Passivity theory allows us to analyze dynamical systems from an energetic point of view, so it is a very suitable design tool for dealing with systems with delays. In this paper, starting from dissipativity theory, we develop an optimization-based controller that is able to cope with delayed communication channels in multi-robot systems.

Finally, another passivity-based approach has been proposed in \cite{duindam2004port}, where the authors lay the foundations of the concept of \textit{energy tanks}.
Extensions of this technique are proposed in \cite{secchi2006position,secchi2007control,secchi2012bilateral,giordano2013passivity}. In all these works, the authors introduce an additional dissipative force on the system in order to prevent the energy tank from depleting---a condition that would introduce a singularity in their proposed approaches---so as to keep a positive \textit{passivity margin}\footnote{As in \cite{giordano2013passivity}, we refer to the passivity margin as the energy dissipated by the system over time.}. Using tools from nonlinear systems theory, we present a method that is also able to exploit the full passivity margin and, nonetheless, does not introduce any singularity.

\subsection{Main Contributions}
In view of what has been discussed in the previous section, we summarize here the main contributions presented in this paper. We propose a framework suitable to control multi-robot systems in presence of communication delays. The approach consists of an optimal way of exploiting the passivity margin of the system, where optimality stems from the minimally invasive nature of the method. In particular, the approach consists of an optimization-based method that is able to ensure the dissipativity of the multi-robot system. This is achieved by expressing the dissipativity condition as a set invariance constraint which is enforced by leveraging control barrier functions \cite{ames2014control}. More specifically, the use of control barrier functions allows us to formulate an optimization problem whose objective is to minimally modify the robots' nominal inputs, in order to ensure the dissipativity of the system.

The remainder of the paper is organized as follows. The next section establishes the multi-robot formation control problem under communication delays that motivates the proposed control method in the paper. Section~\ref{sec:background} will introduce some concepts from the analysis, modeling and control of nonlinear systems which will be used throughout the paper. Using the techniques recalled in Section~\ref{sec:background}, in Section~\ref{sec:cbfbaseddissipativity}, we formulate the optimization-based framework required to ensure the dissipativity of a dynamical system. Finally, the results of applying the developed approach to the coordinated control of a multi-robot system with communication delays are shown in Section~\ref{sec:experiments}.

\section{Problem Setup}
\label{sec:prob-setup}

\begin{figure}
\centering
\includegraphics[width=0.45\textwidth]{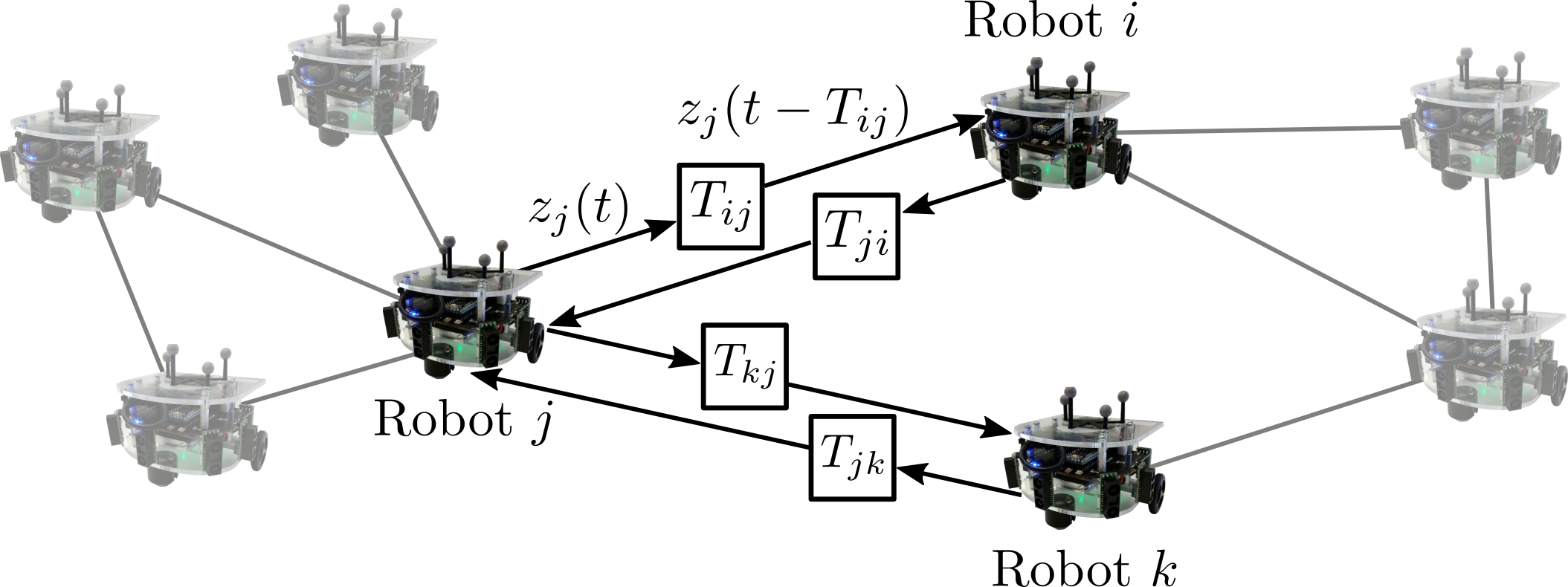}
\caption{Problem setup: a team of robots with delayed communication channel where each communication link may have a different value of time delay. In the depicted example, robot $i$ receives the delayed value of robot $j$'s position, $z_j(t-T_{ij})$.}
\label{comm_delay}
\end{figure}

Consider a group of $N$ planar mobile robots whose positions are denoted by $z_i\in\R^2$, $i\in\{1,...,N\}$. Each robot can be controlled using an acceleration input, and its dynamics are modeled as the following damped double integrator:
\begin{equation}
\label{eq:doubleint}
\ddot z_i = -a\dot z_i + u_i,
\end{equation}
where $a>0$ and $u_i$ is the acceleration input. Moreover, we assume all the robots are equipped with omnidirectional range sensors that allow them to measure their relative position with their neighbors, namely robot $i$ can measure the quantity $z_j-z_i$, when robot $j$ is within its sensing radius. The interactions among the robots can be described by a graph $\mc G=(\mc V, \mc E)$, where $\mc V\subset \N$ is the set of vertices of the graph, representing the robots, $\mc E \subseteq \mc V \times \mc V$ is the set of edges between the robots, encoding the adjacency relationships. We further assume that the graph is undirected, namely $(i,j)\in\mc E \Leftrightarrow (j,i)\in\mc E$.

In order to show the generality of the proposed approach, we consider a communication network between the robots where edges are characterized by different values of delay (see Fig.~\ref{comm_delay}). We denote by $T_{ij}\ge0$ the time delay on edge $(i,j)$ between robots $i$ and $j$. In case the communication link connecting robot $i$ to robot $j$ is considered to be physically the same as the one connecting robot $j$ to robot $i$, one can assume $T_{ij}=T_{ji}$. Nevertheless, the method proposed in this paper does not need this assumption to hold.

Let us assume that the robots are controlled to build a formation by maintaining specified distances on some edges in the set $\mc E$, with the distance $d_{ij}=d_{ji}$ corresponding to the edge $(i,j)$. We further assume the desired formation is both rigid and feasible, i.\,e., the desired shape can be maintained by only maintaining the specified distances, and there exist points $\xi_1,...,\xi_N\in\R^2$ such that $\left\|\xi_i-\xi_j\right\|=d_{ij}~\forall(i,j)\in\mc E$.
As a formation controller, we utilize the following weighted consensus protocol \cite{mesbahi2010graph}:
\begin{equation}
\label{eq:unom}
\hat u_i(t) = \sum\limits_{j\in \mc N_i}w_{ij}(z_i(t),z_j(t-T_{ij}))(z_j(t-T_{ij})-z_i(t))
\end{equation}
where $w_{ij}(z_i(t),z_j(t-T_{ij})) = \left\|z_i(t)-z_j(t-T_{ij})\right\|^2-d_{ij}^2$.
\begin{figure}
    \centering
    \includegraphics[trim={2cm 1cm 2cm 1cm}, clip, width=0.33\textwidth]{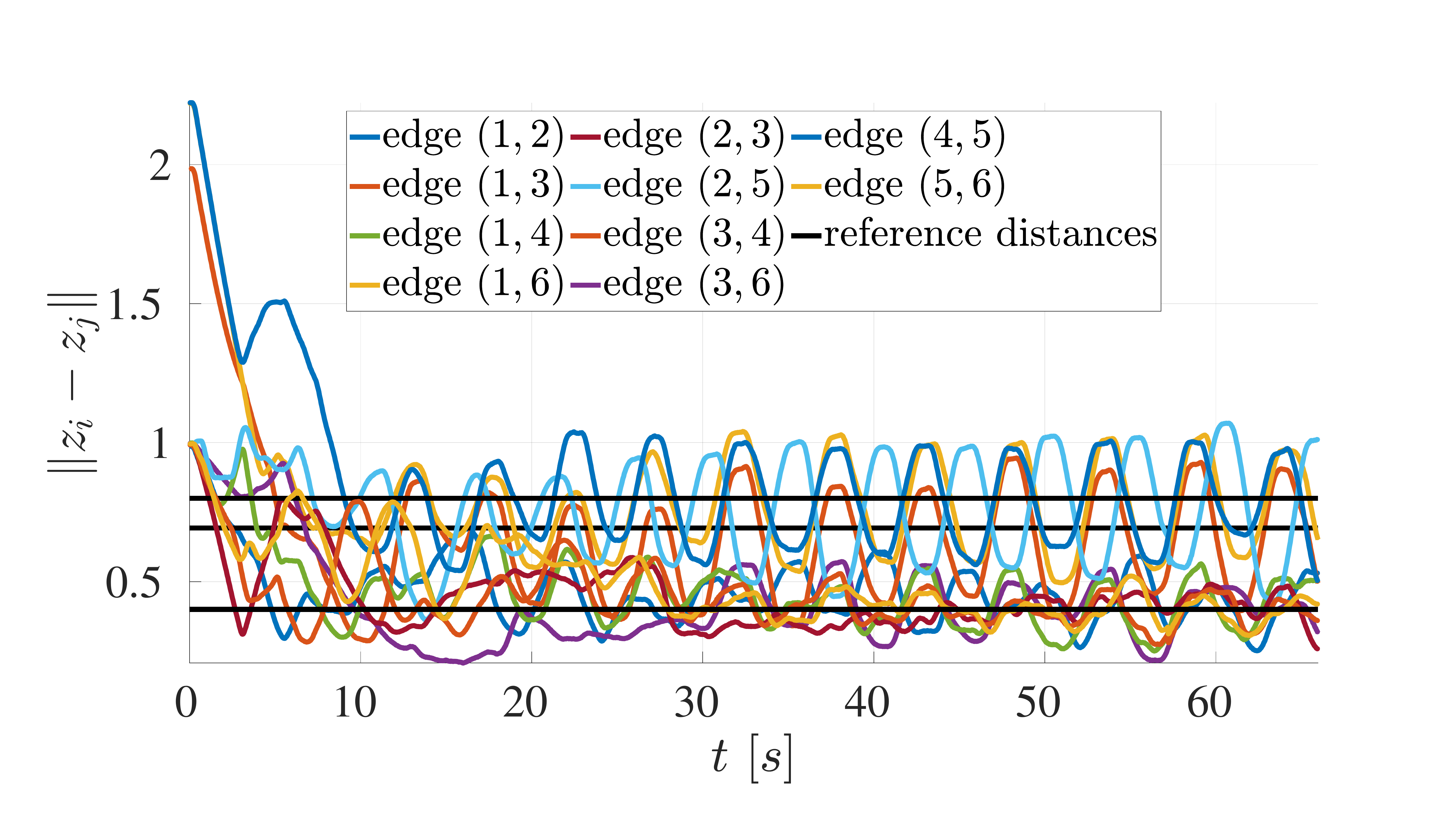}
    \caption{Values of the edge distances between the robots performing formation control with communication delays on the Robotarium \cite{pickem2017robotarium}. The values of the delays are different from edge to edge and are not symmetric, i.\,e., for edge $(i,j)$ between robots $i$ and $j$, $T_{ij}$ and $T_{ji}$ are not necessarily equal. As a consequence of the delayed communication channel, the robots are not able to maintain the desired inter-robot distances, depicted as black lines.}
    \label{fig:edgesnocbfs}
\end{figure}
Without any compensation for delays, the control input $\hat u_i$ can cause the multi-robot system to become unstable. Figure~\ref{fig:edgesnocbfs} shows the edge distances between the robots, while implementing the control protocol \eqref{eq:unom} on the Robotarium \cite{pickem2017robotarium}. As can be seen, because of the effect of an artificially delayed communication channel, the edge distances do not converge to the reference values for the inter-robot distances, depicted in black in the figure.

In the next section, we introduce the background material required to formulate a controller that takes the nominal input $\hat u_i$ and generates a new input which is as close as possible to the nominal one, but it is, at the same time, able to mitigate the oscillations shown in Fig.~\ref{fig:edgesnocbfs}, by preserving the dissipativity of each robot.

\section{BACKGROUND MATERIAL}
\label{sec:background}

The approach to the coordinated control of multi-robot systems with delayed communication networks we propose in this paper hinges on two concepts from the analysis and modeling of nonlinear dynamical systems: port-Hamiltonian systems and dissipativity theory, which are briefly recalled in Sections~\ref{subsec:phs}~and~\ref{subsec:passivity}, respectively. These methods will be combined with control barrier functions, which will be introduced in Section~\ref{subsec:cbfs}.

\subsection{Port-Hamiltonian Modeling}
\label{subsec:phs}

Initially derived in the context of analytical mechanics, the Hamiltonian approach to system modeling lends itself to the analysis of networked systems. Port-Hamiltonian systems are open dynamical systems which interact with other systems through input-output ports. The fact that the port-Hamiltonian modeling of a system is defined explicitly using stored and dissipated energies makes it a suitable analysis tool for the energy-based method we use in this paper.

A special case of port-Hamiltonian systems is the \textit{input-state-output port-Hamiltonian system} \cite{van2006port}:
\begin{equation}
\label{eq:phs}
\begin{cases}
\dot x = (J(x)-R(x))\dfrac{\de H}{\de x}\tr+g(x)u\\
y = g(x)\tr\dfrac{\de H}{\de x}\tr.
\end{cases}
\end{equation}
Here, $x\in\R^n$, $u,y\in\R^m$ are the state, input and output of the system, respectively, $H:\R^n\to\R$ is a continuously differentiable function, the \textit{Hamiltonian}, representing the total energy stored in the system, $J(x)=-J(x)\tr$ is the \textit{structure matrix} that expresses the interconnection structure of the system, and $R(x)=R(x)\tr\ge0$ is the \textit{resistive structure} modeling the dissipation in the system. $J(x)$, $R(x)$ and $g(x)$ are assumed to be continuously differentiable functions of the state $x$.

\subsection{Passivity and Dissipativity}
\label{subsec:passivity}

In this section, we review the definitions of passivity and dissipativity of dynamical systems, which allow us to analyze the behavior of an open system from an energetic point of view. This is our starting point to design coordinated control algorithms for multi-robot systems that are robust against communication delays.

\begin{definition}[Passivity, from \cite{khalil2015nonlinear}]
\label{def:passivity}
The system
\begin{equation}
\label{eq:nls}
\begin{cases}
\dot x = f(x,u)\\
y = c(x,u),
\end{cases}
\end{equation}
where $x\in\R^n$ is the state, $u,y\in\R^m$ are the input and output, respectively, is passive if there exists a positive definite, continuously differentiable storage function ${V:\R^n\to\R}$ such that, for all $x$ and $u$,
\begin{equation*}
\dot V = \dfrac{\de V}{
\de x} f(x,u) \le u\tr y.
\end{equation*}
The system is called lossless if $\dot V = u\tr y$.
\end{definition}

Due to the point-wise nature of Definition~\ref{def:passivity}, enforcing a passivity condition on multi-robot systems to ensure robustness to delays may be too restrictive. That is why we make use of the more general concept of dissipativity in its integral form, defined below.

\begin{definition} [Dissipativity, from \cite{willems1972dissipative}] A dynamical system \eqref{eq:nls} is dissipative if there exists a nonnegative storage function ${S:\R^n\to\R}$ such that, for all $t_0,t_1\in\R_+$ and input signal $u(t)$,
\begin{equation}
\label{eq:dissipativity}
S(x(t_1)) \le S(x(t_0)) + \int_{t_0}^{t_1} w(u(t),y(t))\dt,
\end{equation}
where the locally integrable function $w:\R^m\times\R^m\to\R$ is called the supply rate. Analogously, one can define lossless dissipativity if \eqref{eq:dissipativity} holds with equality.
\label{def:dissipativity}
\end{definition}

The inequality \eqref{eq:dissipativity} says that, if a system is dissipative, the value of the storage function $S$ at each point in time $t_1$ cannot exceed its initial value at time $t_0$ augmented by the supplied energy $\int_{t_0}^{t_1} w(u(t),y(t))\dt$.

\begin{remark}
\label{rmk:passivitydissipativity}
Note that, if the system \eqref{eq:nls} is passive with storage function $V$, then it is dissipative with the storage function $S=V$ and supply rate function given by
\begin{equation}
\label{eq:supplyrateinnerproduct}
w:(u(t),y(t))\mapsto u(t)\tr y(t).
\end{equation}
\end{remark}

As mentioned above, the connection between passivity and dissipativity---pointed out in Remark~\ref{rmk:passivitydissipativity}, see also \cite{bao2007process}---suggests a relaxation from the point-wise definition of passivity condition to the integral definition of the dissipativity condition. This is the approach we will follow in this paper in order to formulate an optimization-based strategy that is able to ensure the dissipativity of an open dynamical system. A similar strategy led to the idea of energy tanks (see, e.\,g., \cite{giordano2013passivity}), whose connection with our method will be discussed in Section~\ref{sec:cbfbaseddissipativity}.

\begin{remark}
Notice that a port-Hamiltonian system \eqref{eq:phs} is passive with storage function $V=H$. Therefore, by Remark~\ref{rmk:passivitydissipativity}, it is also dissipative with storage function $S=H$ and supply rate given by \eqref{eq:supplyrateinnerproduct} (see, e.\,g., \cite{ortega2002interconnection}). Indeed, taking the time derivative of $H$, one obtains
\begin{equation}
\label{eq:passivityphs}
\dot H = -\dfrac{\de H}{\de x} R(x) \dfrac{\de H}{\de x}\tr + u\tr y = -D(x)+u\tr y \le u\tr y,
\end{equation}
since $J(x)$ is skew-symmetric, and
\begin{equation*}
D(x) \triangleq \dfrac{\de H}{\de x} R(x) \dfrac{\de H}{\de x}\tr \ge 0,
\end{equation*}
as the matrix $R(x)$ is positive definite. Then, integrating \eqref{eq:passivityphs} over the interval $[t_0,t_1]$ yields
\begin{equation}
\label{eq:dissipativityphs}
\begin{aligned}
H(x(t_1)) &= H(x(t_0)) + \int_{t_0}^{t_1} u(t)\tr y(t)\dt - \int_{t_0}^{t_1} D(x)\dt\\
&\le H(x(t_0)) + \int_{t_0}^{t_1} u(t)\tr y(t)\dt.
\end{aligned}
\end{equation}
\end{remark}

The dissipativity condition introduced in this section is encoded through integral inequalities, such as \eqref{eq:dissipativityphs}. In the next section, we present control barrier functions, which are helpful to formulate algebraic constraints on the control input variable $u$ to be enforced on the system in order to preserve its dissipativity.

\subsection{Control Barrier Functions}
\label{subsec:cbfs}

As mentioned in Section~\ref{sec:intro}, the use of energy tanks---introduced in \cite{duindam2004port}---allows non-passive actions to be executed when enough passivity margin has been accumulated (see \cite{giordano2013passivity} for a more detailed analysis). The singularity introduced in the system by the presence of the energy tank corresponds to the situation in which the system becomes non-dissipative. In this case, no action that would lead to a further decrease of the total energy of the system can be implemented without losing the dissipativity property.

In this section, we introduce a method to overcome this issue by ensuring that the the system always remains dissipative. This will be achieved leveraging a tool from nonlinear systems theory called control barrier functions \cite{ames2014control}.

Ensuring that the dissipativity condition in \eqref{eq:dissipativityphs} holds lends itself to be formulated as a set forward invariance condition. The forward invariance of a set can be encoded by the definition of a control barrier function. For our work, we refer to the variation on control barrier functions introduced in \cite{xu2015robustness} and their time-varying version in \cite{2019arXiv190305810N}, whose definitions are recalled in the following.

\begin{definition}[Zeroing control barrier function, from \cite{xu2015robustness}]
Consider a dynamical system in control affine form 
	\begin{equation}
	\label{eq:canls}
	\dot x = f(x) + g(x) u,
	\end{equation}
with $f$ and $g$ locally Lipschitz continuous vector fields, and a set $\Omega$ defined as the zero superlevel set of a continuously differentiable function ${h:\R^n\to\R}$, i.\,e.,
\begin{equation}\label{eq:safeset}
	\Omega = \{ x\in \R^n~|~h(x)\ge0 \}.
\end{equation}
The function $h$ is called a zeroing control barrier function, if there exists a locally Lipschitz extended class $\mc K$ function $\gamma$ such that
\begin{equation}\label{eq:uzcbf}
	\sup_{u\in \R^m} \left\{L_f h(x) + L_g h(x) u + \gamma(h(x))\right\} \ge 0\quad\forall x\in \R^n,
\end{equation}
where $L_f h(x)$ and $L_g h(x)$ denote the Lie derivatives of $h$ in the directions of the vector fields $f$ and $g$, respectively.
\end{definition}

The following theorem summarizes two important properties of control barrier functions.

\begin{theorem}[See \cite{xu2015robustness} and \cite{arxiv:extendedacc}]\label{thm:zcbfproperties}
	Given a dynamical system in control affine form \eqref{eq:canls}, where $x\in\R^n$ and $u\in\R^m$ denote the state and the input, respectively, $f$ and $g$ are locally Lipschitz, and a set $\Omega\subset\R^n$ defined by a continuously differentiable function $h$ as in \eqref{eq:safeset}, any Lipschitz continuous controller $u$ such that \eqref{eq:uzcbf} holds renders the set $\Omega$ forward invariant and asymptotically stable, i.\,e.,
	\begin{align*}
	&x(0)\in \Omega \Rightarrow x(t)\in \Omega \quad \forall t\ge0\\
	&x(0)\notin \Omega \Rightarrow x(t)\rightarrow\in \Omega \quad \text{as}~t\to\infty,
	\end{align*}
	where $x(0)$ denotes the state $x$ at time $t=0$.
\end{theorem}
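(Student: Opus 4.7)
The plan is to reduce both conclusions to a single scalar differential inequality along trajectories and then invoke a comparison-lemma argument. Let $u$ be any Lipschitz continuous controller satisfying \eqref{eq:uzcbf} and let $x(t)$ be the unique solution of \eqref{eq:canls} under this controller (existence and uniqueness follow from the local Lipschitz assumptions on $f,g,u$). Define $\eta(t)\triangleq h(x(t))$. Chain rule gives $\dot\eta(t)=L_f h(x(t))+L_g h(x(t))u(t)$, which by \eqref{eq:uzcbf} satisfies the scalar differential inequality
\begin{equation*}
\dot\eta(t)\ge -\gamma(\eta(t)),
\end{equation*}
with $\gamma$ locally Lipschitz and extended class $\mathcal K$. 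This is the single inequality driving both conclusions.

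For the forward invariance claim, I would compare $\eta$ to the solution $y$ of the scalar ODE $\dot y=-\gamma(y)$ with $y(0)=\eta(0)\ge 0$. Since $\gamma(0)=0$ and $\gamma$ is strictly increasing, the nonnegative half-line $\{y\ge 0\}$ is positively invariant for this scalar system (trajectories starting at $y(0)\ge 0$ monotonically decay toward $0$ and cannot cross it). The standard comparison lemma then yields $\eta(t)\ge y(t)\ge 0$ for all $t\ge 0$, which is precisely $x(t)\in\Omega$.

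For the asymptotic stability part, I would apply the same comparison to an initial condition $y(0)=\eta(0)<0$. Because $\gamma$ is extended class $\mathcal K$, $\gamma(y)<0$ for $y<0$, so $\dot y=-\gamma(y)>0$ whenever $y<0$; thus $y(t)$ is strictly increasing on the negative half-line and, by $\gamma(0)=0$, its only possible accumulation point is $0$. Hence $y(t)\to 0^-$, and the comparison $\eta(t)\ge y(t)$ combined with continuity of $h$ implies $\mathrm{dist}(x(t),\Omega)\to 0$ as $t\to\infty$.

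The main obstacle I anticipate is justifying the asymptotic step rigorously rather than just monotonically. One must rule out finite-time blow-up or escape of $x(t)$ before $\eta(t)$ reaches a neighborhood of $0$, which in general requires either an additional boundedness hypothesis on trajectories or a careful local argument near $\partial\Omega$ using the Lipschitz property of $\gamma$ there. Given the references \cite{xu2015robustness,arxiv:extendedacc} cited in the theorem statement, I would appeal to the construction in those sources to handle the global continuation issue, and focus the self-contained argument on the scalar comparison, since the differential inequality reduction is the conceptual core of the result.
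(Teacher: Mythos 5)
The paper itself offers no proof of this theorem: it is imported verbatim from the cited references \cite{xu2015robustness,arxiv:extendedacc}, so there is no in-paper argument to compare against. Your reduction to the scalar differential inequality $\dot\eta \ge -\gamma(\eta)$ followed by a comparison with $\dot y = -\gamma(y)$ is exactly the route taken in those references, and the forward-invariance half is correct as written: local Lipschitz continuity of $\gamma$ gives uniqueness for the comparison ODE, $\gamma(0)=0$ makes the nonnegative half-line invariant for it, and the comparison lemma yields $\eta(t)\ge y(t)\ge 0$. One small reading point: condition \eqref{eq:uzcbf} as literally written is a supremum condition (the definition of $h$ being a ZCBF), and the theorem's hypothesis should be read as the controller itself satisfying $L_f h(x)+L_g h(x)u(x)+\gamma(h(x))\ge 0$ pointwise; you implicitly and correctly use the latter.

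The genuine gap is in the last step of the asymptotic-stability half. From $\eta(t)\ge y(t)$ with $y(t)\uparrow 0$ you obtain $\max\{0,-h(x(t))\}\le -y(t)\to 0$, i.e., $h(x(t))$ converges to the nonnegative half-line. But the jump from this to $\mathrm{dist}(x(t),\Omega)\to 0$ does not follow from continuity of $h$ alone: the sublevel sets $\{x\;|\;h(x)\ge -\epsilon\}$ need not shrink to $\Omega$ as $\epsilon\to 0$ without an additional regularity or properness assumption on $h$ (e.g., a nonvanishing gradient on $\partial\Omega$, or compactness of the relevant sets). The references resolve this by \emph{defining} asymptotic stability of $\Omega$ through the Lyapunov-like merit function $V(x)=\max\{0,-h(x)\}$, which your comparison argument does drive to zero; under that reading your proof is complete, but the distance-based claim as you state it needs the extra hypothesis. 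Separately, the obstacle you flag---finite escape time of $x(t)$---is real but is the standard forward-completeness caveat and is orthogonal to the comparison argument; the sublevel-set issue above is the one that actually needs attention.
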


In order to handle the time-varying constraints coming from the dissipativity condition, in the following, we recall the definition of time-varying control barrier function, introduced in \cite{2019arXiv190305810N}.

\begin{definition}[Time-varying zeroing control barrier fucntions \cite{2019arXiv190305810N}]
\label{def:tvzcbf}
Given a function $h : \R^n \times \R_+ \mapsto \R$, continuously differentiable in both its arguments, consider a dynamical system in control affine form \eqref{eq:canls}, where $x\in\R^n$ and $u\in\R^m$ denote system state and input, respectively, $f$ and $g$ are locally Lipschitz, and the set $\Omega=\left\{ x \in \R^n ~\vert~ h(x,t)\ge0 \right\}$. The function $h$ is a time-varying zeroing control barrier function defined on $\mathcal \R^n \times \R_+$, if there exists a locally Lipschitz extended class $\mathcal K$ function $\gamma$ such that, $\forall x \in \mathcal \R^n$, $\forall t \in \mathcal \R_+$,
\begin{equation}
\label{eq:utvzcbf}
\sup_{u \in \R^m} \left\{ \frac{\partial h}{\partial t} + L_f h(x,t) + L_g h(x,t)\,u + \gamma(h(x,t))\right\} \ge 0.
\end{equation}
\end{definition}

\begin{remark}
Notice that a port-Hamiltonian system \eqref{eq:phs} is in control affine form \eqref{eq:canls}, with
\begin{equation}
\label{eq:fofx}
f(x) = (J(x)-R(x))\dfrac{\de H}{\de x}\tr.
\end{equation}
Therefore, the results developed for control barrier functions apply directly to port-Hamiltonian systems as well.
\end{remark}

The results recalled in this section are used in the following section in order to achieve a \textit{passivation objective} \cite{ortega2002interconnection}, corresponding to ensuring that the dissipativity condition \eqref{eq:dissipativityphs} holds at each point in time.

\section{PASSIVATION USING\\CONTROL BARRIER FUNCTIONS}
\label{sec:cbfbaseddissipativity}

In this section, we propose an optimization-based approach, applicable to a port-Hamiltonian system \eqref{eq:phs}, that is able to ensure that the dissipativity condition \eqref{eq:dissipativityphs} is always satisfied. A similar objective has been pursued in \cite{duindam2004port}, where the authors use a controlled energy transfer between the system and an energy tank, which stores the energy dissipated by the system, in order to be able to implement non-passive actions without violating the dissipativity of the system. Examples of the application of this technique can be found, e.\,g., in \cite{secchi2012bilateral,giordano2013passivity}. However, in these approaches there is no notion of optimality in preventing the energy stored in the energy tank from becoming negative, but rather an artificial dissipative action is introduced in order to accomplish this goal. In the strategy we develop in this paper, we propose an optimization-based controller that modifies, in a minimally invasive fashion, the input to the system, in such a way that the dissipativity condition \eqref{eq:dissipativityphs} is guaranteed to be satisfied.

To this end, let us introduce the following \textit{passivation objective}, which is based on the one defined in \cite{ortega2002interconnection}.
\begin{goal}[Passivation objective]
\label{goal:passivation}
Given the system \eqref{eq:phs}, with storage function $H$, and a nominal control action $\hat u$, find the control input $u = \tilde u + v$, such that $\tilde u$ is as close to $\hat u$ as possible, provided that the system is dissipative with supply rate $v(t)\tr y(t)$. We can formalize this goal in the following optimization program:
\begin{align*}
\label{eq:goal}
\min_{\tilde u} &~\|\tilde u-\hat u\|^2\\
\mathrm{s.t.} &~H(x(t)) \le H(x(t_0)) + \int_{t_0}^{t} v(\tau)\tr y(\tau)\dtau.
\end{align*}
\end{goal}

With this objective, and in view of what has been presented in Section~\ref{sec:background}, let us define the following time-varying control barrier function:
\begin{equation}
\label{eq:ourcbf}
h(x,t) \triangleq \int_{t_0}^t \left(D(x(\tau)) - \tilde u(\tau)\tr y(\tau)\right)\dtau.
\end{equation}
With this definition of $h(x,t)$, and substituting \eqref{eq:fofx} into the inequality \eqref{eq:utvzcbf} that is required to ensure forward invariance of the set $\Omega = \{x\in\R^n~|~h(x,t)\ge 0\}$, we obtain
\begin{equation}
\label{eq:affineineq}
D(x)-y\tr\tilde u+\gamma(h(x,t))\ge0.
\end{equation}
The next proposition establishes the validity of the control barrier function defined in \eqref{eq:ourcbf} and, at the same time, presents an optimization-based controller that is able to achieve Goal~\ref{goal:passivation}.

\begin{proposition}
\label{prop:cbfforpassivation}
The control input $\tilde u^\ast$, solution of
\begin{equation}
\label{eq:qp1}
\begin{aligned}
\min_{\tilde u} &~\|\tilde u-\hat u\|^2\\
\mathrm{s.t.} &~D(x)-y\tr\tilde u+\gamma(h(x,t))\ge0,
\end{aligned}
\end{equation}
renders the system
\begin{equation}
\label{eq:phs+}
\begin{cases}
\dot x = (J(x)-R(x))\dfrac{\de H}{\de x}\tr+g(x)\tilde u+g(x)v\\
y = g(x)\tr\dfrac{\de H}{\de x}\tr
\end{cases}
\end{equation}
dissipative with supply rate $v(t)\tr y(t)$.
\end{proposition}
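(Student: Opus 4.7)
The plan is to reduce the proposition to a direct application of the forward invariance property of time-varying zeroing control barrier functions (Definition~\ref{def:tvzcbf}), together with the usual energy-balance identity for port-Hamiltonian systems.

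First, I would check that the QP \eqref{eq:qp1} is well posed. The constraint is affine in $\tilde u$ and, since $\tilde u$ is otherwise unconstrained, feasibility is immediate: if $y=0$ the constraint reduces to $D(x)+\gamma(h(x,t))\ge 0$, which is guaranteed as long as $h(x,t)\ge 0$ (recall $D(x)\ge 0$ and $\gamma$ is an extended class $\mathcal K$ function, so $\gamma(h)\ge 0$ on $\Omega$); if $y\ne 0$, one can always pick $\tilde u$ in the direction of $-y$ with sufficient magnitude. Hence $h$ defined in \eqref{eq:ourcbf} is a valid time-varying zeroing control barrier function in the sense of Definition~\ref{def:tvzcbf}, and the minimum-norm selection $\tilde u^\ast$ is locally Lipschitz in $(x,t)$ by standard QP sensitivity arguments.

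Next, I would invoke the time-varying analogue of Theorem~\ref{thm:zcbfproperties} on forward invariance. Since $h(x(t_0),t_0)=0\in\Omega$ by construction, any controller satisfying \eqref{eq:affineineq}---and in particular $\tilde u^\ast$---keeps $h(x(t),t)\ge 0$ for all $t\ge t_0$, i.e.,
\begin{equation*}
\int_{t_0}^{t}\!\bigl(D(x(\tau))-\tilde u^\ast(\tau)\tr y(\tau)\bigr)\dtau\ge 0 \quad \forall t\ge t_0.
\end{equation*}

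The final step is a standard energy-balance computation for the augmented system \eqref{eq:phs+}. Differentiating $H$ along its trajectories and exploiting skew-symmetry of $J(x)$ exactly as in \eqref{eq:passivityphs} gives
\begin{equation*}
\dot H = -D(x) + y\tr\tilde u^\ast + y\tr v.
\end{equation*}
Integrating from $t_0$ to $t$ and rearranging yields
\begin{equation*}
H(x(t)) = H(x(t_0)) + \int_{t_0}^{t}\! v(\tau)\tr y(\tau)\dtau - \underbrace{\int_{t_0}^{t}\!\bigl(D(x(\tau)) - \tilde u^\ast(\tau)\tr y(\tau)\bigr)\dtau}_{=\,h(x(t),t)\,\ge\,0}.
\end{equation*}
The CBF guarantee obtained in the previous step then gives $H(x(t))\le H(x(t_0))+\int_{t_0}^{t} v(\tau)\tr y(\tau)\dtau$, which is precisely the dissipativity inequality \eqref{eq:dissipativity} with storage function $S=H$ and supply rate $w(u,y)=v\tr y$.

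The only non-routine point I anticipate is justifying the time-varying forward invariance result (the exact analogue of Theorem~\ref{thm:zcbfproperties} for Definition~\ref{def:tvzcbf}) and the Lipschitzness of $\tilde u^\ast$, both of which can be handled by invoking the cited references \cite{xu2015robustness,2019arXiv190305810N}; everything else is a direct manipulation of the port-Hamiltonian energy balance.
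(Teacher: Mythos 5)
Your proposal is correct and follows essentially the same route as the paper's own proof: invoke the time-varying CBF forward-invariance result to conclude $h(x(t),t)\ge 0$ for all $t\ge t_0$, then combine this with the port-Hamiltonian energy balance for \eqref{eq:phs+} to obtain the dissipativity inequality with supply rate $v\tr y$. The extra points you flag (QP feasibility and Lipschitzness of $\tilde u^\ast$) are not in the paper's proof itself but are addressed separately in Remark~\ref{rmk:feasibility}, so they constitute added care rather than a different argument.
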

\begin{proof}
From Lemma 12 in \cite{2019arXiv190305810N} (the time-varying version of Theorem~\ref{thm:zcbfproperties}), we can conclude that, enforcing the constraint $D(x)-y\tr\tilde u+\gamma(h(x,t))\ge0$ ensures that the set $\Omega = \{x\in\R^n~|~h(x,t)\ge 0\}$ is forward invariant. This is equivalent to saying that, for the system \eqref{eq:phs+}, the following holds $\forall t\ge t_0$:
\begin{equation}
\label{eq:cbfmeaning}
h(x,t) = \int_{t_0}^t \left(D(x(\tau)) - \tilde u(\tau)\tr y(\tau)\right)\dtau \ge0.
\end{equation}
Now, consider the energy balance \eqref{eq:dissipativityphs} for the system \eqref{eq:phs+} over the time interval $[t_0,t]$, for an arbitrary time instant $t$:
\begin{align*}
H(x(t)) =& H(x(t_0)) + \int_{t_0}^{t} \tilde u(\tau)\tr y(\tau)\dtau\\
&+\int_{t_0}^{t} v(\tau)\tr y(\tau)\dtau - \int_{t_0}^{t} D(x(\tau))\dtau.
\end{align*}
Using this equality and the condition \eqref{eq:cbfmeaning}, yields
\begin{equation*}
H(x(t)) \le H(x(t_0)) + \int_{t_0}^t v(\tau)\tr y(\tau)\dtau \quad \forall t\ge t_0,
\end{equation*}
which, from Definition~\ref{def:dissipativity}, means that the system is dissipative with storage function $H$ and supply rate $v(t)\tr y(t)$.
\end{proof}

\begin{remark}
The optimization program \eqref{eq:qp1} is a quadratic program (QP): therefore, it can be solved very efficiently (see, for instance, methods in \cite{boyd2004convex}) in order to evaluate the control input which is closest to the desired one $\hat u$ and which satisfies, at the same time, the dissipativity constraint \eqref{eq:affineineq}.
\end{remark}

\begin{remark}[Feasibility of the QP \eqref{eq:qp1}]
\label{rmk:feasibility} Notice that the QP in \eqref{eq:qp1} is feasible if $\frac{\de H}{\de x}g(x) \neq 0$, which is a relative-degree-1 condition on the control barrier function \eqref{eq:ourcbf}. In case this condition is not satisfied, techniques to handle high relative degree control barrier functions can be employed \cite{nguyen2016exponential,2019arXiv190305810N}.
\end{remark}

In practice, it is desirable to have direct control over the energy dissipated by the system, given by $\int_{t_0}^t D(x(\tau))\dtau$, as well (see, e.\,g., \cite{giordano2013passivity,secchi2012bilateral}). Similarly to what has been done for energy tanks to exploit positive passivity margin \cite{giordano2013passivity}, in this section we introduce an alternative approach which allows us to directly modify both the system input, $\tilde u$, and its resistive structure $R(x)$.

\begin{figure*}
    \centering
    \subfloat[][]{\label{subfig:edges}\includegraphics[trim={1cm 1cm 4cm 1cm}, clip,width=0.33\textwidth]{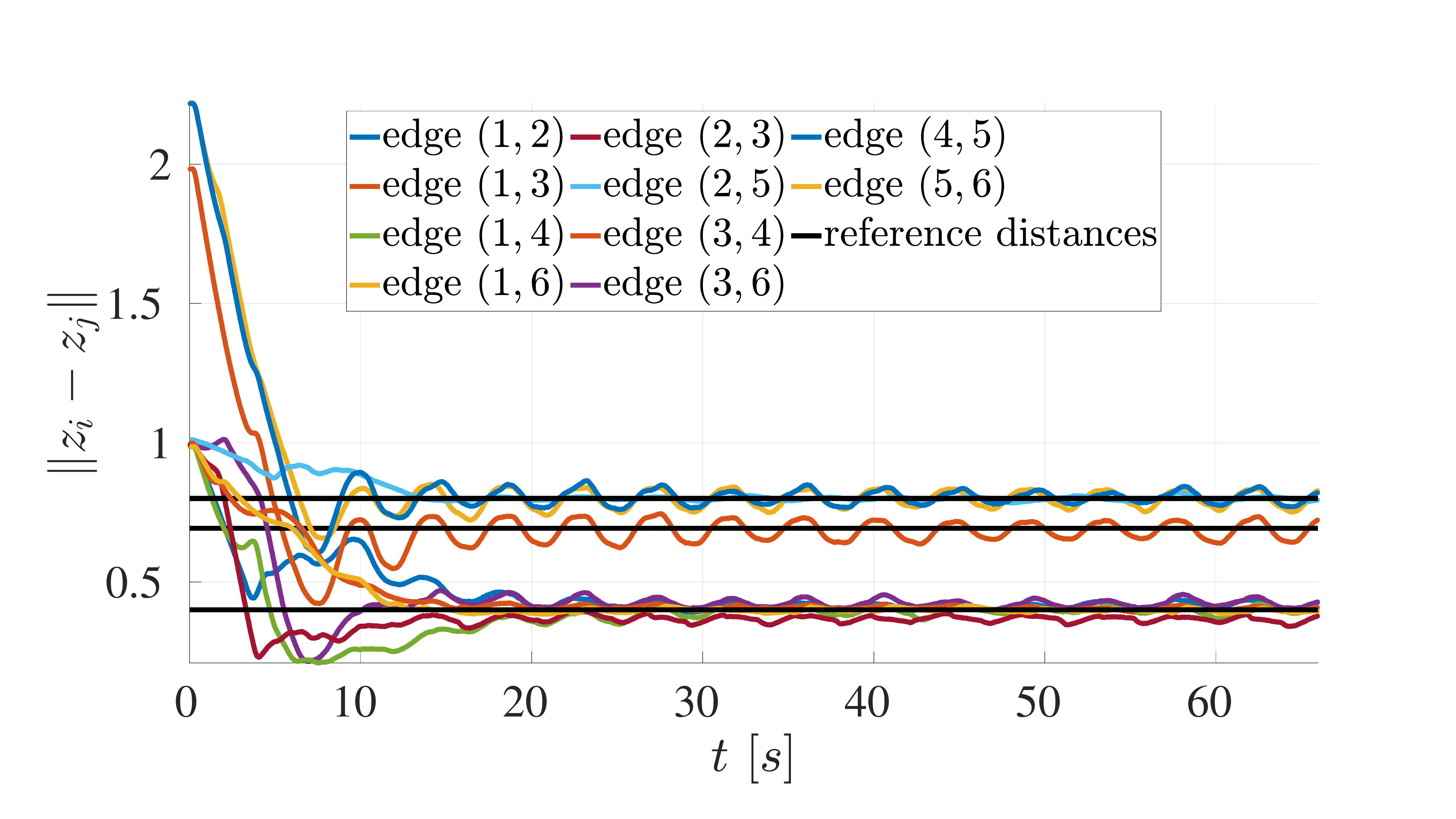}}\hfill
    \subfloat[][]{\label{subfig:cbfs}\includegraphics[trim={3cm 1cm 4cm 1cm}, clip,width=0.33\textwidth]{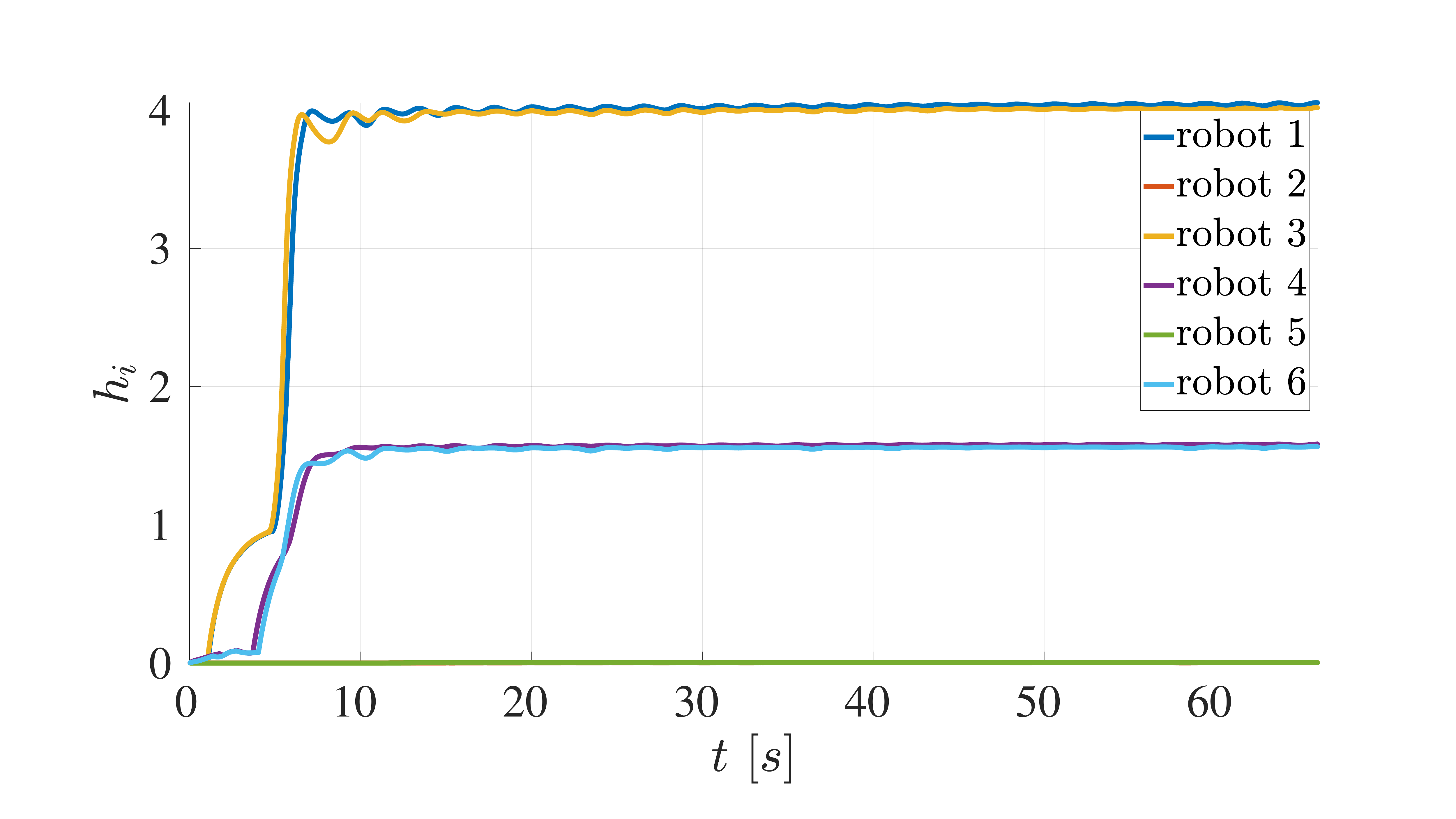}}\hfill
    \subfloat[][]{\label{subfig:inputs}\includegraphics[trim={1cm 1cm 4cm 1cm}, clip,width=0.33\textwidth]{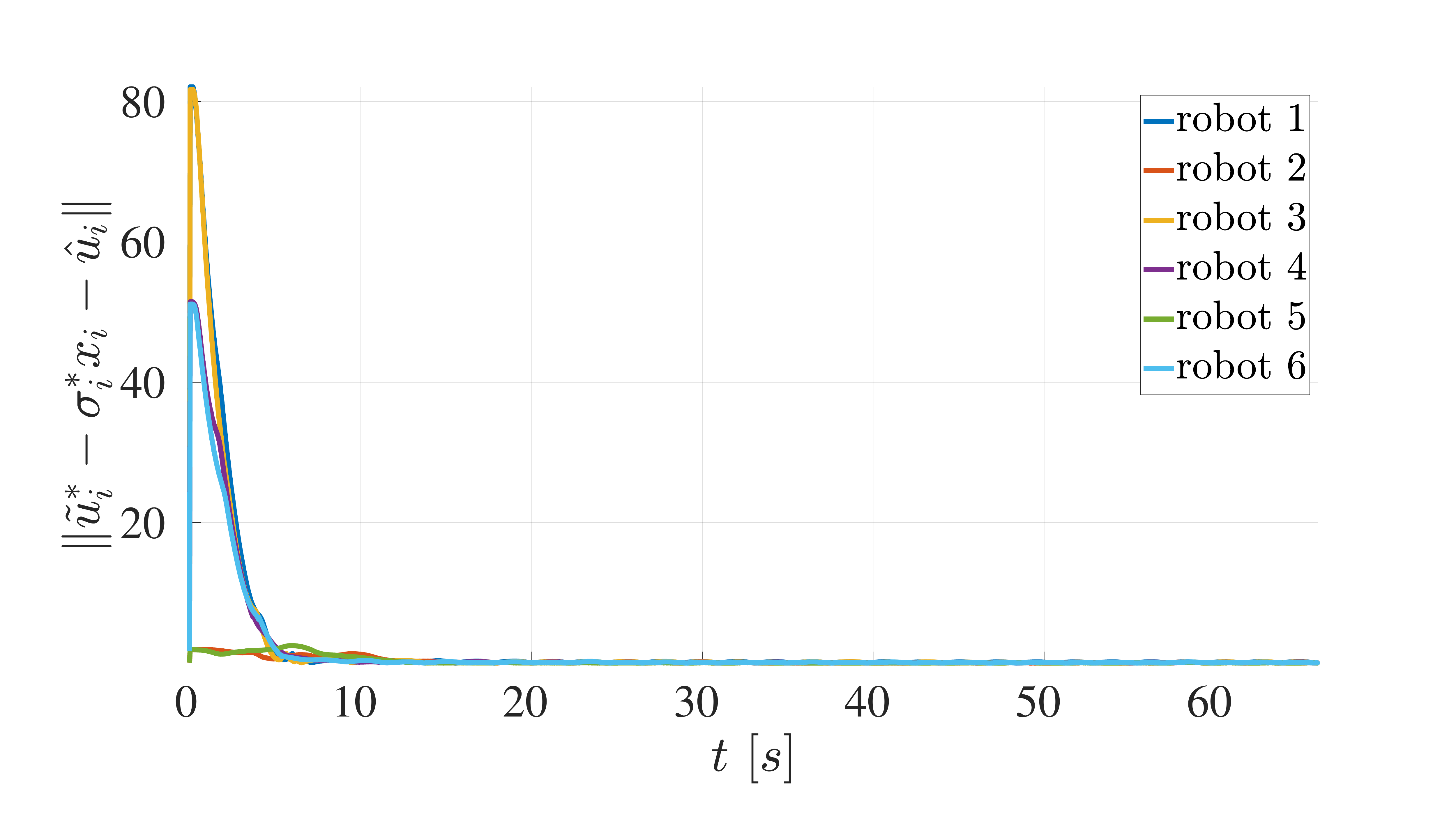}}
    \caption{Results of the application of the proposed approach to a multi-robot system consisting of six differential drive robots deployed on the Robotarium \cite{pickem2017robotarium}, implementing a non-passive formation control algorithm over a delayed communication channel. In Fig.~\protect\ref{subfig:edges}, the values of the edge distances between the robots converge to the desired values $d_{ij}$, depicted as black lines (compare this result with Fig.~\ref{fig:edgesnocbfs} where no action was taken to compensate for the delays). Figures~\protect\ref{subfig:cbfs}~and~\ref{subfig:inputs} show, for each robot, the values of the control barrier functions and the difference between nominal and executed inputs, respectively. As can be seen, at the beginning, the nominal inputs are changed significantly in order to keep a positive value of the barrier function.}
    \label{fig:withcbfs}
\end{figure*}

To this end, consider the new control input
\begin{equation}
\label{eq:uv}
u = \tilde u - \sigma y + v = \tilde u - \sigma g(x)\tr\dfrac{\de H}{\de x}\tr + v,
\end{equation}
where $\sigma$ is a positive real number, representing an additional damping coefficient. With this choice of input, the system \eqref{eq:phs+} can be written as:
\begin{equation}
\label{eq:phs++}
\begin{cases}
\dot x = (J(x)-\widetilde R(x))\dfrac{\de H}{\de x}\tr+g(x)\tilde u+g(x)v\\
y = g(x)\tr\dfrac{\de H}{\de x}\tr,
\end{cases}
\end{equation}
where $\widetilde R(x)=R(x)+\sigma g(x)g(x)\tr\ge0$. Defining $h$ analogously to \eqref{eq:ourcbf} as follows
\begin{equation*}
\label{eq:ourcbf2}
h(x,t) \triangleq \int_{t_0}^t \left(\widetilde D(x(\tau)) - \tilde u(\tau)\tr y(\tau)\right)\dtau,
\end{equation*}
the differential constraint in \eqref{eq:utvzcbf} becomes
\begin{equation}
\label{eq:affineineq2}
\widetilde D(x)-y\tr\tilde u+\gamma(h(x,t))\ge0,
\end{equation}
where
\begin{align*}
\widetilde D(x) &\triangleq \dfrac{\de H}{\de x}\widetilde R(x)~\dfrac{\de H}{\de x}\tr = D(x)+\sigma\left\|g(x)\tr\dfrac{\de H}{\de x}\tr\right\|^2\\
&= D(x)+\sigma\left\|y\right\|^2.
\end{align*}

The expression in \eqref{eq:affineineq2} is affine both in $\tilde u$ and in $\sigma$. Thus, proceeding as before, we can define the following QP:
\begin{equation}
\label{eq:qp2}
\begin{aligned}
\min_{\tilde u,\sigma} &~\|\tilde u-\hat u\|^2+\kappa|\sigma|^2\\
\mathrm{s.t.} &~\widetilde D(x)-y\tr\tilde u+\gamma(h(x,t))\ge0,
\end{aligned}
\end{equation}
where $\kappa$ allows us to combine the costs of \textit{transparency} (represented by $\|\tilde u-\hat u\|$) and dissipated energy (related to $|\sigma|$). Feasibility considerations similar to those in Remark~\ref{rmk:feasibility} hold for the optimization problem \eqref{eq:qp2} as well.

\begin{remark}
The proposed approach keeps transparency and passivity within a single optimization program which, as the one defined in \eqref{eq:qp1}, can be solved efficiently even under real-time constraints. Among other solutions that have been proposed in order to combine transparency and passivity, there is the two-layer paradigm presented in \cite{franken2011bilateral}, where, differently from what we developed in this paper, keeps transparency and passivity in two separate layers.
\end{remark}

We conclude by stating the following proposition---whose proof is similar to that of Proposition~\ref{prop:cbfforpassivation}---which summarizes the results obtained in this section.
\begin{proposition}
The control input $\tilde u^\ast$, solution of \eqref{eq:qp2}, guarantees that Goal~\ref{goal:passivation} is accomplished, by ensuring that the port-Hamiltonian system \eqref{eq:phs++} is dissipative with supply rate $v(t)\tr y(t)$.
\end{proposition}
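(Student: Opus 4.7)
The plan is to mirror the proof of Proposition~\ref{prop:cbfforpassivation}, with the only substantive modification being a careful accounting of the additional damping term $\sigma g(x)g(x)\tr$ that has been absorbed into the resistive structure. First, I would invoke Lemma~12 of \cite{2019arXiv190305810N} (the time-varying counterpart of Theorem~\ref{thm:zcbfproperties}) applied to the constraint \eqref{eq:affineineq2}. Since the optimizer $(\tilde u^\ast,\sigma^\ast)$ of \eqref{eq:qp2} satisfies this constraint by construction, forward invariance of $\Omega=\{x\in\R^n\mid h(x,t)\ge 0\}$ along the closed-loop system follows, giving
\begin{equation*}
\int_{t_0}^t \bigl(\widetilde D(x(\tau))-\tilde u(\tau)\tr y(\tau)\bigr)\dtau \ge 0 \quad \forall t\ge t_0.
\end{equation*}

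Next, I would compute the energy balance directly on the modified port-Hamiltonian system \eqref{eq:phs++}. Taking the time derivative of $H$ along its trajectories and using skew-symmetry of $J(x)$ together with the identity $\frac{\de H}{\de x}\widetilde R(x)\frac{\de H}{\de x}\tr=\widetilde D(x)$ already derived right after \eqref{eq:affineineq2}, one obtains $\dot H = -\widetilde D(x)+\tilde u\tr y+v\tr y$. Integrating over $[t_0,t]$ yields
\begin{equation*}
H(x(t)) = H(x(t_0)) + \int_{t_0}^t \tilde u(\tau)\tr y(\tau)\dtau + \int_{t_0}^t v(\tau)\tr y(\tau)\dtau - \int_{t_0}^t \widetilde D(x(\tau))\dtau.
\end{equation*}

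Finally, combining this equality with the CBF-based inequality from the first step collapses the $\tilde u$- and $\widetilde D$-integrals into a nonpositive quantity, leaving
\begin{equation*}
H(x(t)) \le H(x(t_0)) + \int_{t_0}^t v(\tau)\tr y(\tau)\dtau \quad \forall t\ge t_0,
\end{equation*}
which is exactly the dissipativity condition of Definition~\ref{def:dissipativity} with storage function $H$ and supply rate $v(t)\tr y(t)$, thereby fulfilling Goal~\ref{goal:passivation}. I do not expect a genuine obstacle; the only point that really needs to be verified is that the augmented dissipation rate $\widetilde D$ is still the quantity appearing in the energy balance, which holds because $\widetilde R(x)\succeq 0$ enters the dynamics in exactly the same algebraic position as $R(x)$ did in \eqref{eq:phs}, so the derivation of $\dot H$ leading to \eqref{eq:dissipativityphs} carries over verbatim with $R$ replaced by $\widetilde R$ and $D$ by $\widetilde D$.
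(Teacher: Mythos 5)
Your argument is correct and is exactly the route the paper intends: the paper omits the proof, noting only that it is ``similar to that of Proposition~\ref{prop:cbfforpassivation},'' and your write-up carries out precisely that adaptation, replacing $R$ by $\widetilde R$ and $D$ by $\widetilde D$ in the energy balance and invoking the same forward-invariance lemma on the constraint \eqref{eq:affineineq2}. The one point worth flagging (not a gap in your derivation) is that the interpretation of $\widetilde R(x)\succeq 0$ as added damping presumes $\sigma\ge 0$, which \eqref{eq:qp2} does not explicitly enforce, though the dissipativity conclusion itself only relies on the CBF inequality and so holds regardless.
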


The method developed in this section ensures that a port-Hamiltonian system remains dissipative even under non-passive control actions. In the following section, this control framework will be applied to a formation control algorithm and deployed on a team of mobile robots with delayed communication channel.

\section{EXPERIMENTS}
\label{sec:experiments}

We can now revisit the formation control problem formulated in Section~\ref{sec:prob-setup}. In order to apply the control framework derived in Section~\ref{sec:cbfbaseddissipativity}, let $x_i=\dot z_i$ be the velocity of robot $i$, and $H(x_i)=\frac{1}{2}x_i^Tx_i$. Then, the robot dynamics \eqref{eq:doubleint} can be rewritten in port-Hamiltonian form \eqref{eq:phs} as follows:
\begin{equation*}
\label{eq:phs_exp}
\begin{cases}
\dot x_i = -ax_i+u_i\\
y_i = x_i.
\end{cases}
\end{equation*}
Here, the structure matrix $J(x_i)$ vanishes, the resistive structure is given by $R(x_i)=aI$, $I$ being a $2\times 2$ identity matrix, $g(x_i)=I$, and $\frac{\de H}{\de x_i}^T=x_i$. 
Notice that $D(x_i) = a\left\|x_i\right\|^2\ge0$, that shows the robot model is dissipative.

Considering the control input $u_i=\tilde u_i-\sigma_i x_i+v_i$, as in \eqref{eq:uv}, we can wrap the optimization problem \eqref{eq:qp2} around the nominal control input $\hat u_i$ in \eqref{eq:unom} in order to get the control input $\tilde u_i^\ast$---that is as close as possible to $\hat u_i$---and the smallest damping coefficient $\sigma_i^\ast$ that guarantee that the robots remain dissipative. In Fig.~\ref{fig:withcbfs}, the results of the implementation of the proposed method to the same scenario introduced in Section~\ref{sec:prob-setup} are presented. Time delays $T_{ij}$ have been artificially introduced, and their values have been randomly chosen between 0s and 0.333s. In particular, the edge distances between the robots, the values of the control barrier functions related to each robot, and the difference between the nominal inputs and the ones executed by the robots are shown. In this experiment, the value of the external input $v_i$ has been set to zero for all the robots.

\begin{figure}
    \centering
    \includegraphics[trim={12cm 7cm 12cm 7cm}, clip, width=0.3\textwidth]{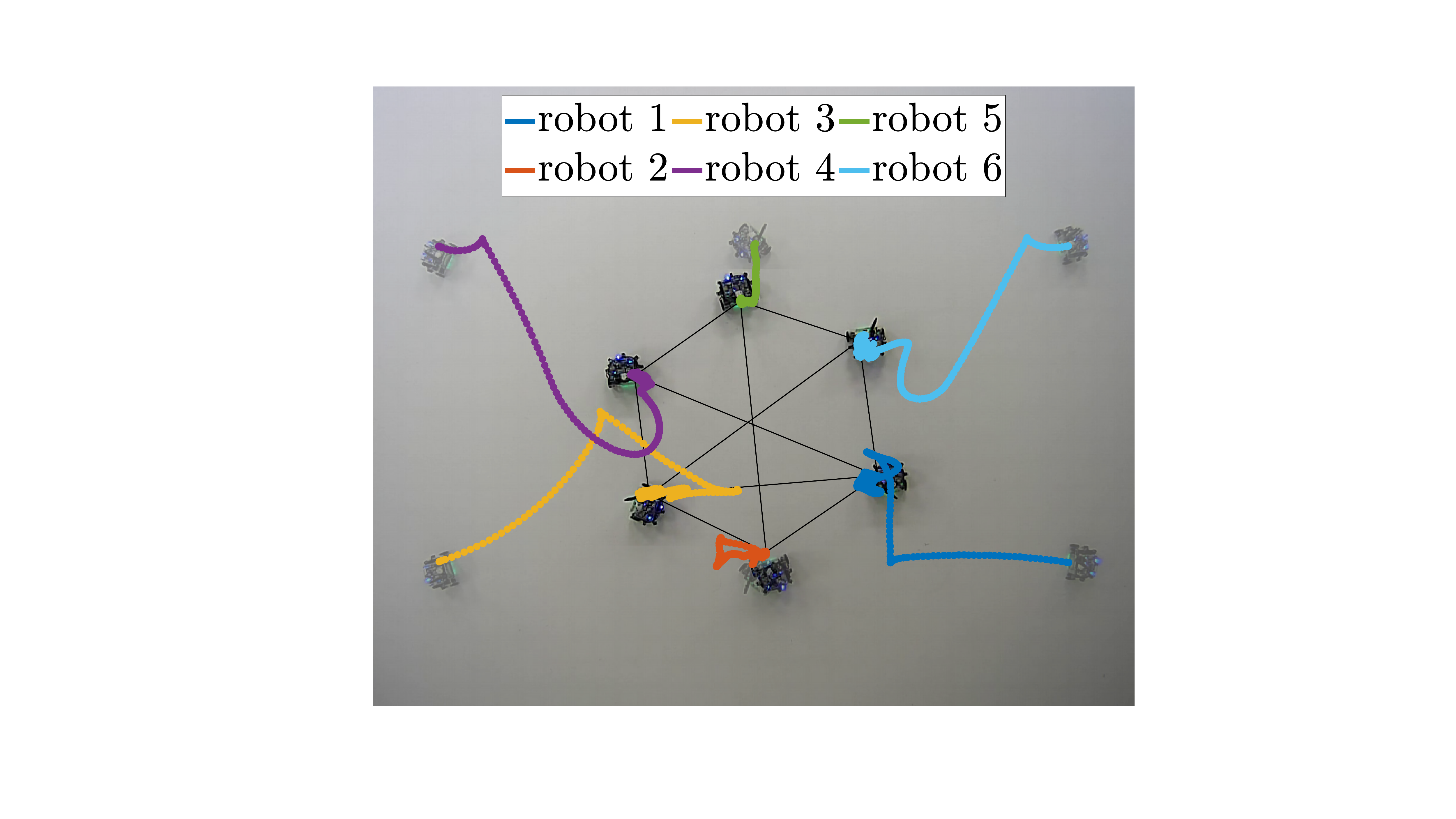}
    \caption{Trajectories of six differential drive robots deployed on the Robotarium \cite{pickem2017robotarium} to execute the controller obtained by solving \eqref{eq:qp2}. The robots achieve the desired formation over a communication network that experiences delays. Two snapshots taken at the beginning and at the end of the experiment, showing the initial and final positions of the robots, respectively, are overlaid on each other.}
    \label{fig:traj}
\end{figure}
Fig.~\ref{fig:traj} shows the trajectories of six differential drive robots on the Robotarium \cite{pickem2017robotarium} achieving the desired formation, despite the presence of non-uniform (i.\,e., varying from edge to edge) and non-symmetric (i.\,e., $T_{ij}$ might be different from $T_{ji}$) delays over the communication links between the robots.

\section{CONCLUSIONS}

In this paper, we have presented an optimization-based strategy suitable for the coordinated control of multi-robot systems with a delayed communication channel. The approach leverages control barrier functions, and dissipativity theory, in order to formulate a dissipativity-preserving controller. The proposed method is minimally invasive, in the sense that the modification of the nominal robots' input only happens if the effect of delays is about to compromise the desired performance of the system. Moreover, the resulting optimization program can be efficiently solved even on robotic platforms with limited computational power. The effectiveness of the proposed strategy to compensate for communication delays has been demonstrated on a team of mobile robots achieving formation control.

\bibliographystyle{IEEEtran}
\bibliography{bib/IEEEabrv,bib/IEEEexample}

\addtolength{\textheight}{-12cm}   





\end{document}